\newcommand{\opstyle}[1]{\mathrm{#1}}
\newcommand{\redstyle}[1]{\mathnormal{#1}}
\newcommand{\pairing}[1]{\langle #1 \rangle}
\newcommand{\oli}{\overline}
\newcommand{\cP}{\opstyle{P}}
\newcommand{\cNP}{\opstyle{NP}}
\newcommand{\ccoNP}{\opstyle{coNP}}
\newcommand{\cUP}{\opstyle{UP}}
\newcommand{\FP}{\opstyle{FP}}
\newcommand{\calT}{{\cal T}}
\newcommand{\reduction}[3][]{%
    \redstyle{\le_{\mathrm{#3}}^{\mathrm{#2}#1}}}
\newcommand{\polyreduction}[2][]{\reduction[#1]{p}{#2}}
\newcommand{\redm}[1][]{\polyreduction[#1]{m}}
\newtheorem{dummytheorem}{Dummy-Theorem}[section]
\newcommand{\proofendsign}{$\Box$} 
\newtheorem{definition}[dummytheorem]{Definition}
\newtheorem{lemma}[dummytheorem]{Lemma}
\newtheorem{theorem}[dummytheorem]{Theorem}
\newtheorem{proposition}[dummytheorem]{Proposition}
\newtheorem{corollary}[dummytheorem]{Corollary}
\newtheorem{claim}[dummytheorem]{Claim}
\newenvironment{proof}{{\noindent \bf Proof }}{%
    {\hspace*{\fill}\proofendsign\par\bigskip}}
\newcommand{\isdefinedl}{\mathop{=}\limits^{\mbox{%
    \raisebox{-0.15ex}[0ex][0ex]{$\scriptscriptstyle df$}}}}
\newcommand{\isdefined}{\isdefinedl}
\newcommand{\az}{{\Sigma}} 
\newcommand{\sow}{{\az^*}}
\newcommand{\sowle}[2][\az]{{{#1}^{\le #2}}}
\newcommand{\rem}[1]{}
\newcommand{\tn}[1]{\textnormal{#1}}
\newlength{\chrlengtha}
\newlength{\chrlengthb}
\newcommand{\N}{\mathbb{N}}
\newcommand{\Z}{\mathbb{Z}}
\begin{document}
\selectlanguage{english}

\def\sqsubsetneq{\mathrel{\sqsubseteq\kern-0.92em\raise-0.15em\hbox{\rotatebox{313}{\scalebox{1.1}[0.75]{\(\shortmid\)}}}\scalebox{0.3}[1]{\ }}}
\def\sqsupsetneq{\mathrel{\sqsupseteq\kern-0.92em\raise-0.15em\hbox{\rotatebox{313}{\scalebox{1.1}[0.75]{\(\shortmid\)}}}\scalebox{0.3}[1]{\ }}}

\newcommand{\DisjcoNP}{\tn{DisjCoNP}}

\title{P$\ne$NP and All Non-Empty Sets in $\cNP\cup \ccoNP$\\ Have P-Optimal Proof Systems Relative to an Oracle}

\author{Titus Dose\\Julius-Maximilians-Universität Würzburg}
\maketitle

\begin{abstract}
    \noindent
    As one step in a working program initiated by Pudlák \cite{pud17}
    we construct an oracle relative to which $\cP\ne\cNP$ and
    all non-empty sets in $\cNP\cup\ccoNP$ have $\cP$-optimal proof systems.    
\end{abstract}

\section{Introduction}
The main motivation for the present paper is an article by Pudlák \cite{pud17} 
who lists several major conjectures in the field
of proof complexity and discusses their relations. 
Among others, Pudl{\'a}k conjectures the following assertions (note that within the present
paper all reductions are polynomial-time-bounded):
\begin{itemize}
	\item   $\mathsf{CON}$ (resp., $\mathsf{SAT}$): $\ccoNP$ (resp., $\cNP$) does not contain many-one complete
            sets that have P-optimal proof systems 
    \item   $\mathsf{CON}^{\mathsf{N}}$: $\ccoNP$ does not contain many-one complete
            sets that have optimal proof systems,\\
            (note that $\mathsf{CON}^{\mathsf{N}}$ is the non-uniform version of
            $\mathsf{CON}$)
    \item   $\mathsf{DisjNP}$ (resp., $\mathsf{DisjCoNP}$): The class of all disjoint $\cNP$-pairs 
            (resp., $\ccoNP$-pairs) does not have many-one complete elements,
    \item   $\mathsf{TFNP}$: The class of all total polynomial search problems does not have complete elements,
    \item   $\mathsf{NP}\cap\mathsf{coNP}$ (resp., $\mathsf{UP}$): $\cNP\cap\ccoNP$ (resp., $\cUP$, 
            the class of problems accepted by $\cNP$ machines with at most one accepting path
            for each input) does not have many-one complete elements.
\end{itemize}
Pudlák asks for oracles separating corresponding relativized conjectures.
Recently there has been made some progress in this working program
\cite{kha19,dg19,dos19} which is documented by the following figure 
representing the current state of the art.
\newcommand{\CCON}{\mathsf{CON}}
\newcommand{\CCONN}{\mathsf{CON}^\mathsf{N}}
\newcommand{\CDisjNP}{\mathsf{DisjNP}}
\newcommand{\CUP}{\mathsf{UP}}
\newcommand{\CRFN}{\mathsf{RFN}_1}
\newcommand{\CPNP}{\mathsf{P}\ne\mathsf{NP}}
\newcommand{\CNPcoNP}{\mathsf{NP}\cap\mathsf{coNP}}
\newcommand{\CSAT}{\mathsf{SAT}}
\newcommand{\CTFNP}{\mathsf{TFNP}}
\newcommand{\CDisjCoNP}{\mathsf{DisjCoNP}}

\begin{center}
\begin{figure}[ht]
\begin{tikzpicture}[->,>=stealth',initial text={},shorten >=1pt,auto,node distance=2cm,
  thin,main node/.style={draw=none,font=\sffamily\Large}, thin]

  \node[main node] (1)  {$\CDisjNP$};
  \node[main node] (2) [below of=1] {$\CCONN$};
  \node[main node] (3) [right  of=2] {$\CUP$};
  \node[main node] (4) [ below of=3] {$\CCON$};
\node[main node] (12) [right of=4]{};
\node[ main node] (14) [ below of=4] {};
\node[ main node] (15) [ right of=14] {};
	\node[ main node] (5) [ right of=15] {$\CCON\vee\CSAT$};
  \node[ main node] (6) [ below  of=5] {$\CPNP$};
  \node[ main node] (7) [ right of=12] {$\CNPcoNP$};
\node[main node] (13) [right of=7]{};
	\node[ main node] (8) [ right of=13] {$\CSAT$}; 
	\node[ main node] (9) [ above of=8] {$\CTFNP$};
	\node[ main node] (10) [ above of=9] {$\CDisjCoNP$};

  \path[every node/.style={font=\sffamily\small, color  = black}]
    (1) edge[bend left] node[left] {} (2)
     (2)   edge [] node [] {} (4)
  (3) edge node [] {} (4)
  (4) edge node [] {} (5)
	(5) edge node [] {} (6)
	(7) edge node [] {} (5)
	(8) edge node [] {} (5)
    (9) edge node [] {} (8)
    (10) edge [] node  {} (9);
    
    \path[every node/.style={font=\sffamily\small, color  = black},dashed]
    (1) edge node [above,sloped] {\cite{dg19}} (3)
    (7) edge node [above] {\cite{dos19b}} (4)
    (1) edge [bend left] node [above,sloped] {\cite{dos19}} (8)
    (3) edge [bend left] node [sloped,above] {\cite{dos19}\phantom{........}} (8)
    (7) edge [] node [above] {\cite{dos19}} (8)
    (10) edge [] node [above,sloped] {\cite{kha19}\phantom{...............}} (4)
    (2) edge[bend left] node[left]{\cite{gssz04}}(1)
    (6) edge[bend right=60] node[right] {Cor \ref{cor_14792314}}(5);
\end{tikzpicture}
\caption{\label{fig_1047120471}
Solid arrows mean implications. All implications occurring in
the graphic have relativizable proofs. A dashed arrow from
one conjecture $\mathsf{A}$ to another conjecture $\mathsf{B}$ means that
there is an oracle $X$ against the implication $\mathsf{A}\Rightarrow
\mathsf{B}$, i.e., relative to $X$, it holds $\mathsf{A}\wedge\neg
\mathsf{B}$.
\newline Pudl{\'a}k \cite{pud17} also defines the conjecture $\CRFN$ and lists it between $\CCON\vee\CSAT$ and
$\CPNP$, i.e., $\CCON\vee\CSAT\Rightarrow \CRFN\Rightarrow \CPNP$. Khaniki \cite{kha19} even shows $\CCON\vee\CSAT\Leftrightarrow \CRFN$, which is why we omit $\CRFN$ in the figure. 
For a definition of $\CRFN$ we refer to \cite{pud17}.
}
\end{figure}
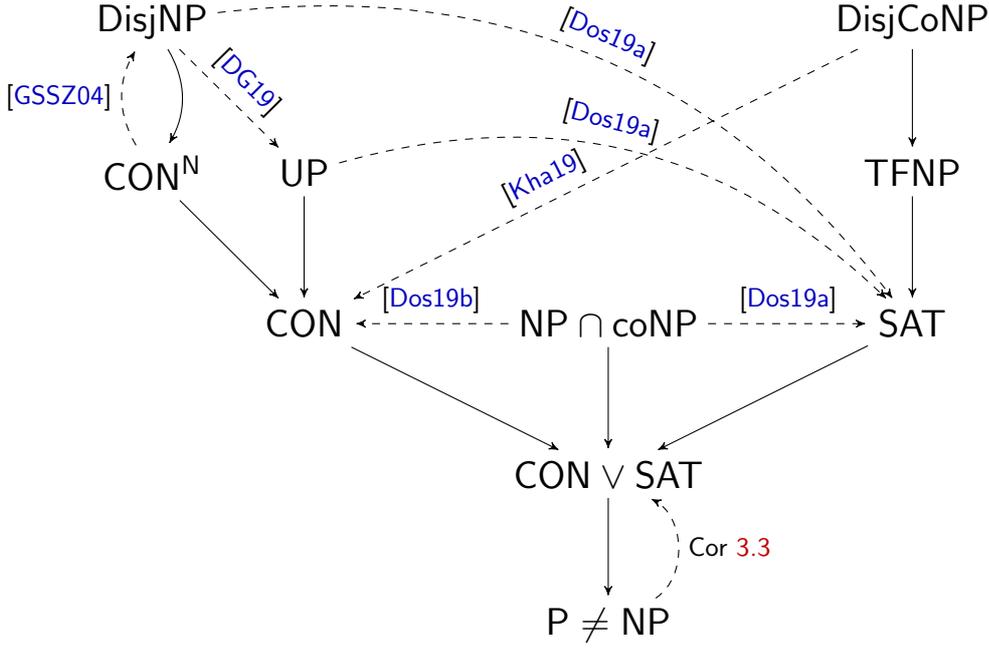
\end{center}

In the figure $O$ denotes the oracle that we construct in the present paper.
It shows that there is no relativizable proof for the implication
$\CPNP \Rightarrow \CCON\vee\CSAT$, i.e.\ the conjectures $\CPNP$ and
$\CCON\vee\CSAT$ cannot be shown equivalent with relativizable proofs.
More precisely, the relativization of $\CCON\vee\CSAT$ (i.e., the statement ``for all oracles
$D$ it holds (i) there is no $A\in\cNP^D$ that has $\cP^D$-optimal proof systems 
or (ii) there is no $A\in\ccoNP^D$ that has $\cP^D$-optimal proof systems'')
is strictly stronger than the relativization of $\CPNP$ (i.e., the statement ``for
all oracles $D$ it holds $\cP^D\ne\cNP^D$'').

\section{Preliminaries} \label{sec_prelim}
Most parts of this section are copied from our previous papers
\cite{dg19} and \cite{dos19}.

Throughout this paper let $\Sigma$ be the alphabet $\{0,1\}$.
We denote the length of a word
$w\in\sow$ by $|w|$.
Let $\sowle{n} = \{w \in \sow ~|~ |w| \le n\}$.
The empty word is denoted by $\varepsilon$ and
the $i$-th letter of a word $w$ for $0 \le i < |w|$ is denoted by $w(i)$, i.e.,
$w = w(0) w(1) \cdots w(|w|-1)$.
If $v$ is a prefix of $w$, i.e., $|v|\le|w|$ and $v(i) = w(i)$ 
for all $0\le i <|v|$, then we write $v \sqsubseteq w$.
For each finite set $Y \subseteq \sow$, let
$\ell(Y) \isdefined \sum_{w \in Y} |w|$.

The set of all integers is denoted by $\Z$.
Moreover, $\N$ denotes the set of natural numbers and
$\N^+$ denotes the set of positive natural numbers.
The identity function $x\mapsto x$ is denoted by
$id$.

We identify $\sow$ with $\N$
via the polynomial-time computable, polynomial-time invertible bijection
$w \mapsto \sum_{i<|w|} (1+w(i)) 2^{|w|-1-i}$,
which is a variant of the dyadic encoding.
Hence, notations, relations, and operations for $\sow$
are transferred to $\N$ and vice versa.
In particular, $|n|$ denotes the length of $n \in \N$.
We eliminate the ambiguity of the expressions $0^i$ and $1^i$
by always interpreting them over $\sow$.

Let $\langle \cdot \rangle : \bigcup_{i \ge 0} \N^i \rightarrow \N$
be an injective, polynomial-time computable, polynomial-time invertible
pairing function such that
$|\pairing{u_1, \ldots, u_n}| = 2(|u_1| + \cdots + |u_n| + n)$.

Given two sets 
$A$ and $B$, $A-B$ denotes the set difference
between $A$ and $B$. 
The complement of a set $A$ relative to the universe $U$ 
is denoted by $\oli{A}= U-A$. The universe will always be
apparent from the context. The symmetric difference of sets $A$ and $B$
is denote by $A\triangle B = (A-B)\cup (B-A)$.

$\FP$, $\cP$, and $\cNP$ denote standard complexity classes \cite{pap94}.
Define $\opstyle{co\mathcal{C}} = \{A\subseteq\Sigma^*\mid \oli{A} \in\mathcal{C}\}$
for a class $\mathcal{C}$.
We also consider all these complexity classes in the presence of
an oracle $O$ and denote the corresponding classes by $\FP^O$, $\cP^O$, $\cNP^O$, and so on.

Let $M$ be a Turing machine. $M^D(x)$ denotes the 
computation of $M$ on input $x$ with $D$ as an oracle. For an
arbitrary oracle $D$ we let $L(M^D) = \{ x ~|~ M^D(x) \tn{ accepts}\}$.
A nondeterministic computation accepts if and only if it has an
accepting path.

For a deterministic polynomial-time Turing transducer, depending on the context,
$F^D(x)$ either denotes the computation of $F$ on input $x$ with $D$ as an oracle
or the output of this computation.

\begin{definition}\label{definition_019470123712}
A sequence $(M_i)$ is called {\em standard enumeration} of
nondeterministic, polynomial-time oracle Turing machines,
if it has the following properties:
\begin{enumerate}
    \item All $M_i$ are nondeterministic, polynomial-time oracle Turing machines.
    \item For all oracles $D$ and all inputs $x$
    the computation $M_i^D(x)$ stops within $|x|^i + i$ steps.
    \item For every nondeterministic, polynomial-time oracle Turing machine $M$
    there exist infinitely many $i \in \N$ such that
    for all oracles $D$ it holds that $L(M^D) = L(M_i^D)$.
    \item There exists a nondeterministic, polynomial-time oracle Turing machine $M$
    such that for all oracles $D$ and all inputs $x$ it holds that $M^D(\langle i,0^{|x|^i+i},x \rangle)$
    nondeterministically simulates the computation $M_i^D(x)$.
\end{enumerate}
Analogously we define standard enumerations of
deterministic, polynomial-time oracle Turing transducers.
\end{definition}
Throughout this paper, we fix some standard enumerations.
Let $M_1,M_2,\dots$ be a standard enumeration of nondeterministic
polynomial-time oracle Turing machines. Then for every oracle $D$, the sequence
$(M_i)$ represents an enumeration of languages in $\cNP^D$, i.e.,
$\cNP^D = \{L(M_i^D)\mid i\in\N^+\}$.
Let $F_1,F_2,\dots$ be a standard enumeration of polynomial time 
oracle Turing transducers.

By the properties of standard enumerations, for each oracle $D$ the
problem
\begin{equation*}K^D = \{\pairing{0^i,0^t,x}\mid \tn{$M_i^D(x)$ accepts within $t$ steps}  \}\label{eq_14702}
\end{equation*}
is $\cNP^D$-complete (in particular it is in $\cNP^D$).

\newcommand{\psim}[1][]{\le^{\mathrm{p}#1}}
\begin{definition}[\cite{cr79}]
    A function $f \in \FP$ is called {\em proof system} for the set $\tn{ran}(f)$.
    For $f,g \in \FP$ we say that {\em $f$ is simulated by $g$} (resp., 
    {\em $f$ is $\cP$-simulated by $g$}) denoted by $f\le g$ (resp., $f \psim g$),
    if there exists a function $\pi$ (resp., a function $\pi \in \FP$)
    and a polynomial $p$ such that $|\pi(x)|\le p(|x|)$ and $g(\pi(x)) = f(x)$ for all $x$.
    A function $g \in \FP$ is {\em optimal} (resp., {\em $\cP$-optimal}),
    if $f\le g$ (resp., $f \psim g$) for all $f \in \FP$ with $\tn{ran(f)}=\tn{ran(g)}$.
    Corresponding relativized notions are obtained by using
    $\cP^O$, $\FP^O$, and $\psim[,O]$ in the definitions above.
\end{definition}The following proposition states the relativized version of a result by
K{\"o}bler, Messner, and Tor\'an \cite{kmt03},
which they show with a relativizable proof.
\begin{proposition}[\cite{kmt03}] \label{propo_pps_oracle}
    For every oracle $O$,
    if $A$ has a $\cP^O$-optimal (resp., optimal) proof system
    and $B \redm[,O]\! A$,
    then $B$ has a $\cP^O$-optimal (resp., optimal) proof system.
\end{proposition}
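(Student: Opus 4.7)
The plan is to adapt the standard unrelativized argument of K\"obler, Messner, and Tor\'an \cite{kmt03}, relativizing every computation to $O$. First, the proof system $f \in \FP^O$ witnesses $A \in \cNP^O$, and $B \redm[,O]\! A$ then gives $B \in \cNP^O$; consequently $B$ has at least one proof system $h_0 \in \FP^O$ (take the canonical one that, given $\langle x, w\rangle$, outputs $x$ if $w$ encodes an accepting computation of a fixed $\cNP^O$-machine for $B$ and outputs a fixed $b_0 \in B$ otherwise; if $B=\emptyset$ the statement is vacuous, since no $\FP^O$-function has empty range). Let $r \in \FP^O$ witness $B \redm[,O]\! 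A$.

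Next I would define a candidate $\cP^O$-optimal proof system $g \in \FP^O$ for $B$ as follows: on $\langle 0, x\rangle$ output $h_0(x)$; on $\langle 1, x, y\rangle$ output $x$ if $f(y) = r(x)$ and $h_0(\varepsilon)$ otherwise; on malformed inputs output $h_0(\varepsilon)$. Every output lies in $B$: values of $h_0$ do by construction, and $f(y) = r(x)$ forces $r(x) \in A$ and hence $x \in B$ by correctness of the reduction. Surjectivity onto $B$ follows already from the first branch, so $g$ is a proof system for $B$.

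To establish $\cP^O$-optimality, let $h \in \FP^O$ be an arbitrary proof system for $B$. I would bundle $r \circ h$ together with $f$ into a single $h' \in \FP^O$ with $\tn{ran}(h') = A$, by setting $h'(\langle 0, z\rangle) = f(z)$, $h'(\langle 1, z\rangle) = r(h(z))$, and mapping malformed inputs to some fixed element of $A$. The first branch forces $\tn{ran}(h') = A$. Applying the $\cP^O$-optimality of $f$ to $h'$ yields $\pi \in \FP^O$ and a polynomial $p$ with $f(\pi(w)) = h'(w)$ and $|\pi(w)| \le p(|w|)$ for every $w$. Then $\sigma(x) = \langle 1, h(x), \pi(\langle 1, x\rangle)\rangle$ defines a function in $\FP^O$ whose output length is polynomially bounded in $|x|$ and that satisfies $g(\sigma(x)) = h(x)$, giving $h \psim[,O] g$. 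The ``optimal'' variant is handled by exactly the same construction: one only loses the requirement $\pi \in \FP^O$, and $\sigma$ is then a length-polynomially-bounded function rather than one in $\FP^O$.

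The step I would flag as the main subtle point is the construction of $h'$. One cannot simply take $h'(z) = r(h(z))$, because in general $\tn{ran}(r \circ h) = r(B) \subsetneq A$, so such an $h'$ would fail to be a proof system for $A$ and the $\cP^O$-optimality hypothesis on $f$ could not be applied. The tagged definition that layers $f$ on top of $r \circ h$, so that $\tn{ran}(h') = A$ regardless of how $r$ behaves, is the essential trick; once it is in place every other verification is routine.
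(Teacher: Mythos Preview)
The paper does not actually prove Proposition~\ref{propo_pps_oracle}: it merely cites \cite{kmt03} and remarks that their proof relativizes. Your argument is correct and is exactly the standard K\"obler--Messner--Tor\'an construction carried out with an oracle, including the key point you flag (padding $r\circ h$ with $f$ so that the resulting function has full range $A$ and optimality of $f$ applies); nothing is missing.
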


\begin{corollary} \label{coro_pps_oracle}
    For every oracle $O$,
    \begin{enumerate}
        \item if there exists a  $\redm[,O]\!$-complete $A \in \cNP^O$
        that has optimal (resp., $\cP^O$-optimal) proof systems,
        then all sets in $\cNP^O$ have optimal (resp., $\cP^O$-optimal) proof systems.
        \item if there exists a  $\redm[,O]\!$-complete $A \in \ccoNP^O$ 
        that has optimal (resp., $\cP^O$-optimal) proof systems,
        then all sets in $\ccoNP^O$ have optimal (resp., $\cP^O$-optimal) proof systems.
    \end{enumerate}
    
\end{corollary}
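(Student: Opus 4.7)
The plan is to apply Proposition~\ref{propo_pps_oracle} directly, as both parts are essentially immediate instances of the proposition combined with the definition of completeness. For part~(1), suppose $A\in\cNP^O$ is $\redm[,O]\!$-complete and possesses an optimal (resp., $\cP^O$-optimal) proof system. Given any non-empty $B\in\cNP^O$, completeness of $A$ yields a reduction $B\redm[,O]\! A$, and Proposition~\ref{propo_pps_oracle} then transfers the proof system from $A$ to $B$. Part~(2) is entirely analogous, simply using $\redm[,O]\!$-completeness of $A$ within $\ccoNP^O$ instead.

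The only subtlety concerns the empty set, which admits no Cook--Reckhow proof system since an $\FP$-function necessarily has non-empty range; this case must be tacitly excluded, matching the convention visible in the paper's title. Beyond this minor formality, there is no real obstacle: since Proposition~\ref{propo_pps_oracle} is already phrased relative to an arbitrary oracle $O$, no additional relativization work is needed, and both parts of the corollary follow uniformly from the same short argument.
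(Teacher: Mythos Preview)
Your proposal is correct and matches the paper's approach: the paper states this corollary immediately after Proposition~\ref{propo_pps_oracle} without any explicit proof, treating it as a direct consequence of that proposition together with the definition of $\redm[,O]$-completeness. Your observation about excluding the empty set is also apt and consistent with the convention implicit in the paper's title.
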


\medskip
We introduce some quite specific notations that are designed
for the construction of oracles.
The domain and range of a function $t$ are denoted by
$\tn{dom}(t)$ and $\tn{ran}(t)$, respectively.
If a partial function $t$ is not defined at point $x$,
then $t \cup \{x \mapsto y\}$ denotes the continuation $t'$ 
of $t$ that at $x$ has value $y$ and satisfies 
$\tn{dom}(t') = \tn{dom}(t)\cup\{x\}$

If $A$ is a set, then $A(x)$ denotes the characteristic function at point $x$,
i.e., $A(x)$ is $1$ if $x \in A$, and $0$ otherwise.
An oracle $D \subseteq \N$ is identified with its characteristic sequence
$D(0) D(1) \cdots$, which is an $\omega$-word.
(In this way, $D(i)$ denotes both, the characteristic function at point $i$ and
the $i$-th letter of the characteristic sequence, which are the same.)
A finite word $w$ describes an oracle that is partially defined, i.e.,
only defined for natural numbers $x<|w|$.
We can use $w$ instead of the set $\{i ~|~ w(i)=1 \}$ and
write for example $A = w \cup B$, where $A$ and $B$ are sets.
For nondeterministic oracle Turing machines $M$
we use the following phrases:
A computation $M^w(x)$ {\em definitely accepts},
if all paths accept and all queries are $<|w|$.
A computation $M^w(x)$ {\em definitely rejects},
if it contains a path that rejects (within $t$ steps) and
the queries on this path are $<|w|$.
For deterministic oracle Turing machines $P$ we say:
A computation $P^w(x)$ {\em definitely accepts}
(resp., {\em definitely rejects}),
if it accepts (resp., rejects) and
the queries are $<|w|$.

For a deterministic or nondeterministic Turing machine $M$ we say 
that the computation $M^w(x)$ {\em is defined},
if it definitely accepts or definitely rejects.
For a transducer $F$, the computation $F^w(x)$ {\em is defined},
if all queries are $<|w|$.

\section{Oracle Construction}
We now construct the announced oracle.
\begin{lemma}[\cite{dg19}]\label{84189612339821692813}
For all $y\le |w|$ and all $v\sqsupseteq w$ it holds $K^v(y) = K^w(y)$.
\end{lemma}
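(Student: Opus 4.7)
The plan is to show that any $y\le|w|$ can be handled case-wise, with the payoff being that the $t$-step computation underlying $K$ cannot reach past position $|w|-1$ of the oracle. First I would split on whether $y = \pairing{0^i,0^t,x}$ for some $i\in\N^+$, $t\in\N$, and $x\in\sow$: if not, then $y\notin K^D$ for every oracle $D$, so $K^v(y)=K^w(y)=0$ and the claim is trivial. Otherwise the length formula for $\pairing{\cdot}$ gives $|y| = 2(i+t+|x|+3)\ge 2t$, and the dyadic identification of $\sow$ with $\N$ (which satisfies $n \ge 2^{|n|}-1$ for $n\ge 1$) turns the hypothesis $y\le|w|$ into $|y|\le \log_2(|w|+1)$, hence $4^t \le |w|+1$.

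Next I would use the standard observation that a $t$-step nondeterministic computation can write queries of length at most $t$, so the numerical value of any query is at most $2^{t+1}-2 \le 2\sqrt{|w|+1}-2$. The elementary inequality $2\sqrt{|w|+1}<|w|+2$ (valid for all $|w|\ge 1$, which is forced here: a valid triple satisfies $|y|\ge 2(1+0+0+3)=8$, so $y \ge 2^{8}-1 = 255$ and thus $|w|\ge 255$) shows that every possible query position is strictly below $|w|$.

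Since $w\sqsubseteq v$, the oracles $v$ and $w$ agree on all positions $<|w|$, so every branch of $M_i^v(x)$ of length $\le t$ is literally the same as the corresponding branch of $M_i^w(x)$. In particular $M_i^v(x)$ has an accepting path within $t$ steps iff $M_i^w(x)$ does, which is exactly $K^v(y)=K^w(y)$.

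The one real obstacle is the quantitative bookkeeping that links $|y|$, $t$, and the dyadic value of possible queries: turning a bound on $|y|$ coming from the pairing formula into a strict inequality between the largest possible query value and $|w|$. Once that estimate is in place, the remainder is just the obvious fact that extending the oracle beyond position $|w|$ cannot affect a computation whose queries all lie below $|w|$.
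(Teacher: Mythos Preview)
Your argument is correct, but the quantitative detour through $\log_2(|w|+1)$ and $2\sqrt{|w|+1}$ is unnecessary. The paper's proof uses the same case split and the same observation that queries have length at most $t$, but then finishes in one line: from $|q|\le t$ and $t < |y|$ (immediate from $|y|=2(i+t+|x|+3)$) one gets $|q|<|y|$, and in the dyadic identification a strictly shorter word is always a strictly smaller number, so $q<y\le|w|$. There is no need to translate the hypothesis $y\le|w|$ into a bound on $|y|$, nor to estimate the numerical value of a query via $2^{t+1}-2$; the monotonicity of the encoding in length does all the work.

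What your approach buys is nothing extra here; it just replaces the one-step implication $|q|<|y|\Rightarrow q<y$ by explicit numerical estimates. The paper's route is shorter and avoids the side verification that $|w|\ge 1$.
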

\begin{proof} We may assume $y = \pairing{0^i,0^t,x}$ for suitable $i,t,x$,
        since otherwise $K^w(y) = K^v(y) = 0$.
        For each $q$ that is queried within the first $t$ steps
        of $M_i^w(x)$ or $M_i^v(x)$ it holds that
        $|q| \le t < |y|$ and thus, $q < y$.
        Hence, these queries are answered the same way relative to $w$ and $v$,
        showing that $M_i^w(x)$ accepts within $t$ steps if and only if $M_i^v(x)$ accepts
        within $t$ steps.
\end{proof}

\begin{theorem}\label{theorem_0917240914}
There exists an oracle $O$ relative to which 
the following statements hold:
\begin{itemize}
  \item $\cP^O\ne\cNP^O$
  \item $K^O$ has $\cP^O$-optimal proof systems.
  \item $\oli{K^O}$ has $\cP^O$-optimal proof systems.
\end{itemize}
\end{theorem}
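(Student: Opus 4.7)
The plan is to construct $O$ by a stage-wise process, generating an increasing chain of finite oracle prefixes $w_0 \sqsubseteq w_1 \sqsubseteq \cdots$ with $O := \bigcup_{s} w_s$ that meets three interleaved families of requirements: $R^{\mathrm{sep}}_j$, asserting $L(P_j^O) \ne L_0$ for the test set $L_0 := \{0^n \mid \exists x \in \Sigma^n, 1x \in O\} \in \cNP^O$ and each deterministic polynomial-time oracle machine $P_j$ (so jointly yielding $\cP^O \ne \cNP^O$); $R^K_i$, which for each transducer $F_i$ ensures that if $F_i$ is a proof system for $K^O$ then a dedicated activation bit $\phi^K_i$ lies in $O$; and $R^{\oli K}_i$, the symmetric statement for $\oli{K^O}$.

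Fix two candidate proof systems $f_K, f_{\oli K} \in \FP^O$ in advance. On input $\langle 1, i, z, 0^t \rangle$ with $t = |z|^i + i$, each simulates $F_i^O(z)$ and outputs the result if the relevant activation bit lies in $O$, else a default element of the target language. On input $\langle 2, y, w \rangle$, $f_K$ outputs $y$ if $w$ is a valid $\cNP^O$-witness of $y \in K^O$, while $f_{\oli K}$ outputs $y$ if an oracle-encoded coNP-certificate bit for $(y, w)$ is set. During the construction we populate the reserved certificate region so that every element of $\oli{K^O}$ carries such a witness, ensuring $\oli{K^O} \in \cNP^O$ in the final oracle. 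The central invariants are $(\star_K)$: $\phi^K_i \in w_s$ implies $F_i^O(z) \in K^O$ for every $z$, and $(\star_{\oli K})$ analogously; these force $\mathrm{ran}(f_K) \subseteq K^O$ and $\mathrm{ran}(f_{\oli K}) \subseteq \oli{K^O}$, while the second branches guarantee the reverse inclusions. Once $\phi^K_i \in O$, the map $\pi_i(z) := \langle 1, i, z, 0^{|z|^i + i}\rangle$ is in $\FP$ and satisfies $f_K(\pi_i(z)) = F_i^O(z)$, delivering the P-simulation of $F_i$; symmetrically for $f_{\oli K}$.

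At stage $s$ we handle one requirement. For $R^{\mathrm{sep}}_j$ we pick $n$ so large that $1\Sigma^n$ is untouched by $w_s$, simulate $P_j^{w_s}(0^n)$ (which touches only polynomially many oracle bits), and extend $w_s$ to falsify the simulation's verdict: either by zeroing out all of $1\Sigma^n$ (if $P_j$ accepts) or by inserting a single bit $1x$ with $|x| = n$ (if $P_j$ rejects). For $R^K_i$ we inspect $w_s$ and, using Lemma \ref{84189612339821692813}, decide whether $F_i^O(z) \in K^O$ is already forced for every $z$ that future stages might expose; if so, we activate $\phi^K_i$ and extend $w_s$ far enough to lock in the relevant outputs, otherwise a $z$ with $F_i^O(z) \notin K^O$ is detectable, we leave $\phi^K_i$ out, and $F_i$ is not a proof system in any case. $R^{\oli K}_i$ is handled symmetrically, with the additional step of depositing $\oli K$-certificates for any newly observed elements of $\oli{K^O}$.

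The principal obstacle is that the invariants $(\star_K)$ and $(\star_{\oli K})$ are quantified over all $z$, while each stage can inspect only finitely many of them. The remedy is structural: the diagonalization region $1\Sigma^*$, the activation bits, and the certificate regions are placed in disjoint, easily identifiable parts of $\Sigma^*$; the lengths $n$ chosen in separation stages grow fast enough that every diagonalization modification lies above every previously inspected oracle position. Lemma \ref{84189612339821692813} then freezes the $K^O$-membership of every $F_i^O(z)$ value a committed $F_i$ has already output, and a finite bookkeeping argument shows that any violation of an activation invariant would be detectable at some finite stage, contradicting the commitment rules. Together with Corollary \ref{coro_pps_oracle}, meeting all requirements delivers the claimed oracle.
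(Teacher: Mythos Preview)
Your outline captures the high-level shape of the construction, but the crucial interaction between the separation stages and the proof-system invariants is not handled, and this is exactly where the paper's technical work lies.

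Concretely: once you have activated $\phi^K_i$, your invariant $(\star_K)$ demands $F_i^O(z)\in K^O$ for \emph{every} $z$, including $z$ so large that $F_i^O(z)$ queries positions inside a future diagonalization block $1\Sigma^n$. Your separation stage then flips some bit $1x$ to falsify $P_j$'s verdict; this may change the value $F_i^O(z)$, and nothing you have said prevents the new output from landing outside $K^O$. ``Disjoint regions'' does not help, because $F_i$ may query any region; ``fast-growing $n$'' does not help, because $z$ ranges over all of $\Sigma^*$. The sentence ``a finite bookkeeping argument shows that any violation \dots\ would be detectable at some finite stage, contradicting the commitment rules'' is where the proof needs to be, not a substitute for it. Relatedly, your activation decision (``decide whether $F_i^O(z)\in K^O$ is already forced for every $z$ that future stages might expose'') is not well-defined as stated: it presupposes knowledge of what future stages---including future diagonalizations---will do to the oracle.

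The paper resolves both issues by a different mechanism. It does not use a single activation bit per $i$; instead, for each committed $i$ it continuously writes, for every input $x$, a bit $10c(i,x,F_i^w(x))$ into the oracle (condition V3), and the optimal system $f$ outputs $y$ only when it sees such a bit. Soundness of $f$ is then a local oracle property (V1), not a global fact about $F_i$. The commitment decision becomes a clean dichotomy: either some $t'$-valid extension makes $F_i$ fail on one input, or no such extension exists, and this is decidable at the current stage. The price is that the separation step must now extend the oracle while honouring V3/V5, and this is the real content of the proof: the paper builds a set $Q(U)$ by recursively tracing which $F_j$-computations touch the queries of $P_i^u(0^n)$, shows $\ell(Q(U))\le 2(n^i+i)<2^n$ via a geometric-series bound (each recursion level halves the total length because $|x|^j+j\le |c(j,x,y)|/2$), and hence can plant a witness $\alpha\in 0\Sigma^n$ outside $Q(U)$ and still produce a $t_s$-valid extension agreeing with the reference oracle on all of $Q(U)$. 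Your proposal contains no analogue of this argument, and without it the construction does not go through.
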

The following corollary follows from Theorem~\ref{theorem_0917240914} and
Corollary~\ref{coro_pps_oracle}.
\begin{corollary}\label{cor_14792314}
There exists an oracle $O$ relative to which 
the following statements hold:
\begin{itemize}
  \item $\cP^O\ne\cNP^O$
  \item Each set in $\cNP^O$ has $\cP^O$-optimal proof systems.
  \item Each set in $\ccoNP^O$ has $\cP^O$-optimal proof systems.
\end{itemize}
\end{corollary}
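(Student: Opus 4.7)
The plan is to build $O$ by a finite-extension diagonalization, after fixing once and for all two proof systems $g_K^O$ and $g_{\oli K}^O$ that relegate almost all work to oracle lookups. I would partition $\Sigma^*$ into disjoint polynomial-time recognizable regions $R_D$ (diagonalization cells), $R_K$ and $R_C$ (translation slots for $K^O$ and $\oli{K^O}$), and $R_B$ (base-membership slots for $\oli{K^O}$), together with injective FP functions $\tau_K, \tau_C, \beta$ sending tuples into these regions at addresses strictly longer than any string they reference. Define $g_K^O(w)$ to output $\pairing{0^i,0^t,x}$ when $w$ encodes an accepting path of $M_i^O(x)$ within $t$ steps, to output $z$ when $w=\langle 1,i,x,z\rangle$ and $\tau_K(i,x,z)\in O$, and to default to a fixed $z_0\in K^O$ otherwise. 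Define $g_{\oli K}^O(w)$ to output $z$ when $w=\langle 0,z\rangle$ and $\beta(z)\in O$, or when $w=\langle 1,i,x,z\rangle$ and $\tau_C(i,x,z)\in O$, and to default to a fixed $z_1\in\oli{K^O}$.

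Three families of requirements drive the construction. For each deterministic polynomial-time oracle machine $P_e$, the test language $L^O:=\{0^n\mid O\cap R_D\cap\Sigma^n\ne\emptyset\}\in\cNP^O$ must differ from $L(P_e^O)$ at some length, witnessing $\cP^O\ne\cNP^O$. For each transducer $F_i$ and each input $x$, one must have $\tau_K(i,x,F_i^O(x))\in O$ whenever $F_i^O(x)\in K^O$; no action is needed otherwise, because then $F_i^O$ is not a proof system for $K^O$ and no translator is required. Symmetrically for $\oli{K^O}$ through $\tau_C$, together with the maintenance requirement $\beta(z)\in O\Leftrightarrow z\in\oli{K^O}$, which guarantees every element of $\oli{K^O}$ a direct base proof. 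The global invariants $\tau_K(i,x,z)\in O\Rightarrow z\in K^O$ and $\tau_C(i,x,z)\in O\Rightarrow z\in\oli{K^O}$, combined with the accepting-path format and the $\beta$-format, yield $\tn{ran}(g_K^O)=K^O$ and $\tn{ran}(g_{\oli K}^O)=\oli{K^O}$; the P-translator witnessing $\cP^O$-optimality on either side is simply $x\mapsto\langle 1,i,x,F_i^O(x)\rangle$.

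The individual stages are routine. A diagonalization stage for $P_e$ picks $n$ so large that $n^e+e$ is dominated by $|R_D\cap\Sigma^n|$, extends the committed prefix to resolve the queries of $P_e^O(0^n)$, and then toggles an unqueried address in $R_D\cap\Sigma^n$ to contradict $P_e^O$ on $0^n$. A translation stage for $(i,x)$ extends the committed prefix until $F_i^O(x)$ is determined, then extends further so that the prefix length strictly exceeds $|F_i^O(x)|$; by Lemma~\ref{84189612339821692813} this freezes $K^O(F_i^O(x))$ permanently, after which $\tau_K(i,x,F_i^O(x))$ is placed into $O$ iff this frozen value lies in $K^O$. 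The $\oli K$-translation stages and the $\beta$-stages are handled identically.

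The main obstacle is consistency: the systems $g_K^O$ and $g_{\oli K}^O$ are defined by the \emph{final} oracle, but only finitely many bits are committed per stage, so every decision must remain correct under every later extension of $O$. Lemma~\ref{84189612339821692813} is exactly the tool that makes this work: once a prefix of $O$ of length greater than $|z|$ is committed, no later extension can alter $K^O(z)$ or $\oli{K^O}(z)$, so the invariants on $\tau_K,\tau_C,\beta$ are permanent once set. Moreover, because $R_K,R_C,R_B$ sit at addresses strictly above the length of any string they reference, bits written into these regions cannot retroactively change a previously frozen $K^O$-value, and so the diagonalization and translation stages cannot sabotage each other. The remaining bookkeeping — showing that each requirement is eventually reached, that $g_K^O,g_{\oli K}^O\in\FP^O$, and that every proof system for $K^O$ (resp.\ $\oli{K^O}$) receives its P-translator — is a routine verification parallel to the arguments of \cite{dg19,dos19}.
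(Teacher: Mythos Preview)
The paper's proof of this corollary is a one-line application of Theorem~\ref{theorem_0917240914} plus Corollary~\ref{coro_pps_oracle}. Your sketch is essentially a direct proof of Theorem~\ref{theorem_0917240914}: the partition into regions $R_D,R_K,R_C,R_B$, the coding addresses $\tau_K,\tau_C,\beta$, and the intended optimal systems $g_K^O,g_{\oli K}^O$ mirror the paper's $0\Sigma^n$, $10c(i,x,y)$, $11c(i,x,y)$ and its invariants V1--V5. (The final transfer from $K^O,\oli{K^O}$ to all of $\cNP^O\cup\ccoNP^O$ via Proposition~\ref{propo_pps_oracle} is absent from your sketch, but that step is indeed routine.)

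The substantive gap is in the diagonalization. You write that one ``extends the committed prefix to resolve the queries of $P_e^O(0^n)$, and then toggles an unqueried address in $R_D\cap\Sigma^n$''. But $P_e^O(0^n)$ can query coding addresses $\tau_K(j,y,z)$, $\tau_C(j,y,z)$, $\beta(z)$ at lengths between $n$ and $n^e+e$, and the values written there depend on $F_j^{(\cdot)}(y)$ and on $K^{(\cdot)}(z)$, which in turn depend on oracle bits at length $\ge n$ --- in particular on the bit you intend to toggle. Flipping a single ``unqueried'' $R_D$-bit can therefore cascade into the coding regions and change bits that $P_e$ \emph{did} query, undoing the diagonalization. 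If, to avoid this, you instead default the coding regions to $0$ during extensions, then infinitely many addresses $\tau_K(i,x,F_i^O(x))$ get frozen to $0$ before the translation stage for $(i,x)$ is ever reached, and your translator $x\mapsto\langle 1,i,x,F_i^O(x)\rangle$ breaks. This tension --- maintaining the coding invariant while still winning the $\cP^O$-vs-$\cNP^O$ diagonalization --- is precisely the technical heart of the paper's proof: the query-closure $Q(U)$ and the geometric bound $\ell(Q(U))\le 2(n^i+i)$ (Claims~\ref{claim_18471850} and~\ref{claim_13780781024}) exist exactly to locate a toggle point whose full cascade misses all of $P_e$'s queries, and then to re-extend the oracle validly while pinning those queries down. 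Your ``main obstacle'' paragraph treats only forward stability of $K^O$-values under later extensions (the easy direction supplied by Lemma~\ref{84189612339821692813}) and does not address this.
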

\begin{proof}{\bf of Theorem~\ref{theorem_0917240914}}
We define $c(i,x,y) = \pairing{0^i, 0^{|x|^i + i}, x,y}$.
Let $D$ be a (possibly partial) oracle and define
\begin{eqnarray*}
A^D &=& \{0^n\mid \exists_{y\in\az^n} 0y\in D\}.
\end{eqnarray*}
We will construct the oracle such that $A^O\in\cNP^O-\cP^O$ for
the final oracle $O$. Note that throughout this proof we sometimes omit the oracles in
the superscript, e.g., we write $\cNP$ or
$A$ instead of $\cNP^D$ or $A^D$. However, 
we do not do that in the ``actual'' proof but only when 
explaining ideas in a loose way
in order to give the reader the intuition behind the 
occasionally very technical arguments.

Let us briefly sketch the idea of our construction.

{\em Preview of construction.} For each $F_i$ we first try to ensure that
$F_i$ does not compute a proof system for $K$ (resp., $\oli{K}$). If this is
impossible, then $F_i$ inherently computes a proof system for $K$ (resp., $\oli{K}$).
In that case we start to encode the values of $F_i$ into the oracle so that $F_i$ can
be $\cP$-simulated by some proof system for $K$ (resp., $\oli{K}$)
that we will define later and finally show to be $\cP$-optimal.

Moreover, we diagonalize against all $P_i$ such that $A$ is not in $\cP$ relative to the final oracle.

\begin{claim}[\cite{dg19}] \label{claim_041730745333}
        Let $w\in\az^*$ be an oracle, $i\in\N^+$, and $x,y\in\N$ such that
        $c(i,x,y)\le |w|$. Then the following holds.
        \begin{enumerate}
            \item $F_i^w(x)$ is defined and $F_i^w(x)<|w|$.
            \item For all $v \sqsupseteq w$, $\big(F_i^w(x)\in K^w\Leftrightarrow F_i^w(x)\in K^v\big)$.
            
        \end{enumerate}
    \end{claim}
\begin{proof}
As the running time of $F_i^w(x)$ is bounded by 
$|x|^i + i < |c(i,x,y)| < c(i,x,y) \le |w|$,
the computation $F_i^w(x)$ is defined and its output 
is less than $|w|$. Hence, 1 holds.
Consider 2. It suffices to show that 
$K^v(q) = K^w(q)$ for all $q < |w|$ and all $v\sqsupseteq
w$. This holds by Lemma~\ref{84189612339821692813}.
\end{proof}

During the construction we maintain a collection 
of requirements $t: \{0,1\}\times\N^+\to \N$, 
where $t\in\calT$ for $$\calT = \{t:\{0,1\}\times\N^+\to \N\mid
\text{$t$ has a finite domain}\}.$$
A partial oracle $w$ is called $t$-valid 
if it satisfies the following properties.
\begin{itemize}
    \item[V1]   For all $i\in\N^+$,
                \begin{enumerate}
                    \item  if $10c(i,x,y)\in w$ for some $x,y\in\N$, then $F_i^w(x) = y\in K^w$.
                    \item  if $11c(i,x,y)\in w$ for some $x,y\in\N$, then $F_i^w(x) = y\in \oli{K^w}$.
                \end{enumerate}
                
    \item[V2]   For all $i\in\N^+$, if $t(0,i) = 0$, then there exists $x$ such that $F_i^w(x)$ is defined  
                and $F_i^w(x)\notin K^v$ for all $v\sqsupseteq w$.
                              
    \item[V3]   For all $i\in\N^+$, if $t(0,i)>0$, then for all $x\in\N$ with 
                $t(0,i) \le 10c(i,x,F_i^w(x)) < |w|$,
                it holds $10c(i,x,F_i^w(x))\in w$.
                          
    \item[V4]   For all $i\in\N^+$, if $t(1,i) = 0$, then there exists $x$ such that $F_i^w(x)$ is defined
                and $F_i^w(x)\notin \oli{K^v}$ for all $v\sqsupseteq w$.
                
    \item[V5]   For all $i\in\N^+$, if $t(1,i)>0$, then for all $x\in\N$ with 
                $t(1,i) \le 11c(i,x,F_i^w(x)) < |w|$,
                it holds $11c(i,x,F_i^w(x))\in w$.       
\end{itemize}
The following claim follows directly from the definition of $t$-valid.
\begin{claim}\label{claim_10794190247210}
Let $t,t'\in\calT$ such that $t'$ is an extension of $t$. If $w\in\az^*$
is $t'$-valid, then $w$ is $t$-valid.
\end{claim}
\begin{claim}\label{claim_1094101243}
Let $t\in\calT$ $u,v,w\in\az^*$ be oracles with $u\sqsubseteq v\sqsubseteq w$. 
If $u$ and $w$ are $t$-valid, then $v$ is $t$-valid.
\end{claim}
\begin{proof}
$v$ satisfies V2 and V4 since $u$ satisfies V2 and V4.

Let us argue for V1. Let $10c(i,x,y)\in v$ for $i\in\N^+$ and $x,y\in\N$.
Then $10c(i,x,y)\in w$ and as $w$ is $t$-valid, it holds by V1 that $F_i^w(x) = y\in K^w$.
By Claim~\ref{claim_041730745333}, $F_i^v(x) = F_i^w(x) = y$ and $K^v(y) = K^w(y) = 1$.
Analogously, $11c(i,x,y)\in v$ for $i\in\N^+$ and $x,y\in\N$ implies $F_i^v(x) = y\in \oli{K^v}$.
Thus, $v$ satisfies V1.

Consider V3/V5. Let $i\in\N^+$, $x\in\N$, and $b\in \{0,1\}$ such that $0<t(b,i)\le 1bc(i,x,F_i^v(x)) < |v|$.
Then by Claim~\ref{claim_041730745333}, $F_i^v(x) = F_i^w(x)$.
As $w$ is $t$-valid, we obtain by V3/V5 that $1bc(i,x,F_i^w(x))\in w$. Since $1bc(i,x,F_i^v(x)) < |v|$
and $v\sqsubseteq w$, we have $1bc(i,x,F_i^v(x)) = 1bc(i,x,F_i^w(x))\in v$, which shows that
$v$ satisfies V3/V5. 
\end{proof}

{\em Oracle construction.} Let $T:\N\to \{0,1,2\}\times\N^+$ be a bijection.
Each value of $T(s)$ for $s\in\N$ stands for a task. 
We treat the tasks in the order specified by $T$.
We start with the nowhere defined function $t_0$ and the $t_0$-valid oracle $w_0 = 
\varepsilon$. Then we define functions $t_1,t_2,\dots$ in $\calT$ such that $t_{i+1}$
is an extension of $t_i$ and partial oracles $w_0\sqsubsetneq w_1\sqsubsetneq w_2
\sqsubsetneq\dots$ such that each $w_i$ is $t_i$-valid. Finally, we choose
$O = \bigcup_{i = 0}^\infty w_i$ (note that $O$ is totally defined since in
each step we strictly extend the oracle).
We describe step $s>0$, which starts with a $t_{s-1}$-valid oracle $w_{s-1}$
and extends it to a $t_s$-valid $w_s\sqsupsetneq w_{s-1}$ depending on the value
of $T(s)$. We will argue later that the construction is possible.
\begin{itemize}
    \item task $(0,i)$ for $i\in\N^+$: Let $t' = 
        t_{s-1}\cup \{(0,i)\mapsto 0\}$. If there
        exists a $t'$-valid $v\sqsupsetneq w_{s-1}$, 
        then let $t_s = t'$ and define $w_s = v$ for the least 
        $t'$-valid $v\sqsupsetneq w_{s-1}$.
        Otherwise, let
        $t_s = t_{s-1}\cup \{(0,i)\mapsto |w_{s-1}|\}$
        and choose $w_s = w_{s-1}b$ for $b\in\{0,1\}$
        such that $w_s$ is $t_s$-valid. 
    \item task $(1,i)$ for $i\in\N^+$: 
        Let $t' = t_{s-1}\cup \{(1,i)\mapsto 0\}$. If there
        exists a $t'$-valid $v\sqsupsetneq w_{s-1}$, 
        then let $t_s = t'$ and define $w_s = v$ for the least 
        $t'$-valid $v\sqsupsetneq w_{s-1}$.
        Otherwise, let  
        $t_s = t_{s-1}\cup \{(1,i)\mapsto |w_{s-1}|\}$
        and choose $w_s = w_{s-1}b$ for $b\in\{0,1\}$
        such that $w_s$ is $t_s$-valid. 
    \item task $(2,i)$ for $i\in\N^+$: Let $t_s = t_{s-1}$ and
        chose $w_s\sqsupsetneq w_{s-1}$ such that for some $n\in\N$ the
        computation $P_i^{w_s}(0^n)$ is defined, $0^n\in A^v\Leftrightarrow
        0^n\in A^{w_s}$ for all $v\sqsupseteq w_s$, and $\big(0^n\in A^{w_s} \Leftrightarrow 
        P_i^{w_s}(0^n)\text{ rejects}\big)$.
        
\end{itemize}

\begin{claim}\label{claim_82146969998}
Let $s\ge 0$ and $w\sqsupseteq w_s$ such that $w$ is $t_s$-valid.
    \begin{enumerate}
        \item\label{item_01283470} If $z = 10c(i,x,F_i^w(x))$ for $i\in\N^+$ and $x\in\N$ with 
            $0 < t_s(0,i) \le z$, then $w1$ is $t_s$-valid.
        \item\label{item_10278480} If $z = 11c(i,x,F_i^w(x))$ for $i\in\N^+$ and $x\in\N$ with 
            $0 < t_s(1,i) \le z$, then $w1$ is $t_s$-valid.
        \item\label{item_389217401} If $z = 0y$ for $y\in\az^n$ and $n\in\N$, then $w0$ and $w1$ are $t_s$-valid.
        \item\label{item_4712800032} In all other cases (i.e., none of the assumptions 
            in \ref{item_01283470}--\ref{item_389217401}
            holds) $w0$ is $t_s$-valid.
    \end{enumerate}
\end{claim}
\begin{proof}
First observe that V2 and V4 are not affected by extending the oracle.
Moreover, by Claim~\ref{claim_041730745333}, as $w$ satisfies V1, V3, and V5,
$wb$ for $b\in\{0,1\}$ satisfies
\begin{itemize}
	\item[(A)]  V1.1 unless $b = 1$, $z = 10c(i,x,y)$ for $i,x,y\in\N$ with $i>0$ and $\neg (F_i^w(x) = y \in K^w)$
	\item[(B)]  V1.2 unless $b = 1$, $z = 11c(i,x,y)$ for $i,x,y\in\N$ with $i>0$ 
                and $\neg (F_i^w(x) = y \in \oli{K^w})$
    \item[(C)]  V3 unless $b = 0$ and $z = 10c(i,x,F_i^w(x))$ for $i>0$ and $x\in\N$ with $0<t_s(0,i)\le z$.
    \item[(D)]  V5 unless $b = 0$ and $z = 11c(i,x,F_i^w(x))$ for $i>0$ and $x\in\N$ with $0<t_s(1,i)\le z$.
\end{itemize}
This proves statement~\ref{item_389217401}. Let us argue for statement~\ref{item_4712800032}. According to
(A) and (B) $w0$ satisfies V1. If $w0$ does not satisfy V3 (resp., V5), then according to (C) (resp., (D)),
$z = 10c(i,x,F_i^w(x))$ (resp., $z = 11c(i,x,F_i^w(x))$) for $i\in\N^+$ and $x\in\N$ as well as $0< t_s(0,i)\le z$
(resp., $0 < t_s(1,i)\le z$). However, this case is covered by statement~\ref{item_01283470}
(resp., statement~\ref{item_10278480}). This proves statement~\ref{item_4712800032}.

Let us consider statements~\ref{item_01283470} and \ref{item_10278480} simultaneously. 
Due to the statements~(C) and (D) it suffices to argue for V1.1 (resp., V1.2 when 
arguing for statement~\ref{item_10278480}). 
Here it is sufficient to show $F_i^w(x)\in K^w$ (resp., $F_i^w(x)\in \oli{K^w}$).
For a contradiction assume $F_i^w(x)\notin K^w$ (resp., $F_i^w(x)\notin \oli{K^w}$). Let $s'>0$ be the step with
$T(s') = (0,i)$ (resp., $T(s') = (1,i)$). Then $s' \le s$. By Claim~\ref{claim_10794190247210},
the oracle $w$ is $t_{s'-1}$-valid and by Claim~\ref{claim_041730745333}, 
$F_i^w(x)$ is defined and $F_i^w(x)\notin K^v$ (resp., $F_i^w(x)\notin\oli{K^v}$)
for all $v\sqsupseteq w$. Hence, $w$ is even
$t$-valid for $t = t_{s'-1}\cup\{(0,i)\mapsto 0\}$ (resp., $t = t_{s'-1}\cup\{(1,i)\mapsto 0\}$).
But then the construction would have chosen $t_{s'} = t$, in contradiction to $t_s(0,i) > 0$
(resp., $t_s(1,i) > 0$).
\end{proof}

We now show that the described construction is possible: for a
contradiction, assume that it is not. Hence, there exists
a minimal $s>0$ such that step~$s$ fails. Then $w_{s-1}$
is $t_{s-1}$-valid. 

Assume that in step~$s$ some task $(a,i)$ for $a\in\{0,1\}$
and $i\in\N^+$ is treated. Then $t_{s-1}(a,i)$ is not defined
as this value is defined in the unique treatment of the task
$(a,i)$. Thus, $t'$ is well defined. Moreover, if there exists
a $t'$-valid oracle $v\sqsupsetneq w_{s-1}$, then step~$s$ is
clearly possible. Otherwise, by the (sufficiently large) choice
of $t_s(a,i)$, the oracle $w_{s-1}$ is even $t_s$-valid and by Claim~\ref{claim_82146969998}, 
there exists $b\in\{0,1\}$ such that the oracle
$w_{s} = w_{s-1}b$ is $t_s$-valid. Hence, if some task $(a,i)$ for $a\in\{0,1\}$ is treated 
in step~$s$, then we obtain a contradiction.

From now on we assume that step~$s$ treats some task $(2,i)$
for $i>0$. Thus, $t_s = t_{s-1}$ and we need to show that there exist some $t_s$-valid
$w_s\sqsupsetneq w_{s-1}$ and some $n\in\N$ such that the
computation $P_i^{w_s}(0^n)$ is defined, $0^n\in A^v\Leftrightarrow
0^n\in A^{w_s}$ for all $v\sqsupseteq w_s$, and $\big(0^n\in A^{w_s} \Leftrightarrow 
P_i^{w_s}(0^n)\text{ rejects}\big)$.

Choose $n$ large enough such that $2^n > 2(n^i + i)$. Let $u_0\sqsupseteq w_{s-1}$ be
the minimal $t_s$-valid oracle that is defined for all words of length $\le n$. 
Such an oracle exists by Claim~\ref{claim_82146969998}. Moreover, let $u\sqsupseteq u_0$
be the minimal $t_s$-valid oracle that is defined for all words of length 
$2(n^i + i)$. Such an oracle exists by Claim~\ref{claim_82146969998} and
by Claim~\ref{claim_82146969998}.\ref{item_389217401}, $u\cap 0\az^n = \emptyset$. If $P_i^u(0^n)$ accepts,
then it definitely accepts by the choice of $u$ and since $0^n\notin A^v$ for all $v\sqsupseteq u$
(note that $u$ is defined for all words of length $n+1$), we can
choose $w_s = u$ and obtain a contradiction to the assumption that step~$s$ is not
possible.

From now on we assume that $P_i^u(0^n)$ rejects. Let $U$ be the set of oracle queries
of $P_i^u(0^n)$ whose length is $\ge n + 1$. We define $Q_0(U) = U$ and for $m\in\N$
$$
Q_{m+1}(U) := \bigcup_{\substack{\{10c(j,x,y),11c(j,x,y)\}\cap Q_m(U)\ne\emptyset\\
j>0, x,y\in\N}} \{q\in\az^{\ge n+1}\mid \text{$q$ is queried by $F_j^u(x)$}\}.
$$
Moreover, define $Q(U) = \bigcup_{m\in\N} Q_m(U)$.
\begin{claim}\label{claim_18471850}
$\ell(Q(U)) \le 2(n^i + i)$.
\end{claim}
\begin{proof}
By definition of $Q_0(U)$ it holds $\ell(Q_0(U)) \le n^i + i$. We show that for all 
$m\in\N$ it holds $\ell(Q_{m+1}(U)) \le \nicefrac{\ell(Q_m(U))}{2}$. Then for all $m\in\N$ it holds
$\ell(Q_m(U)) \le \nicefrac{\ell(Q_0(U))}{2^m}$ and thus,
$$
\ell(\bigcup_{k = 0}^m Q_k(U)) \le \ell(Q_0(U)) \cdot \sum_{k = 0}^m\nicefrac{1}{2^k}
\le (n^i + i)\cdot \frac{1-\nicefrac{1}{2^{m+1}}}{\nicefrac{1}{2}} < 2(n^i + i),
$$
which shows $\ell(Q(U))\le 2(n^i + i)$.

It remains to show that $\ell(Q_{m+1}(U)) \le \nicefrac{\ell(Q_m(U))}{2}$ for all 
$m\in\N$. Let $\alpha\in Q_m(U)$. If $\alpha$ is not of the form $10c(j,x,y)$ or $11c(j,x,y)$,
then it generates no elements in $Q_{m+1}(U)$. Let
$\alpha = 1b c(j,x,y)$ for $b\in\{0,1\}$, $j\in\N^+$, and $x,y\in\N$. 
This affects that all queries of $F_j^u(x)$ are added into $Q_{m+1}(U)$.
The computation time of $F_j^u(x)$ (and also the sum of the lengths of all queries asked by that computation)
is bounded by $|x|^j + j\le \nicefrac{|c(j,x,y)|}{2}$ (cf.\ the definition
of $c(\cdot,\cdot,\cdot)$ and the definition of the pairing function). Hence,
\begin{eqnarray*}
\ell(Q_{m+1}(U))&= \ell\Big( \bigcup_{\substack{\{10c(j,x,y),11c(j,x,y)\}\cap Q_m(U)\ne\emptyset\\
j>0, x,y\in\N}} \{q\in\az^{\ge n+1}\mid \text{$q$ is queried by $F_j^u(x)$}\}\Big)\\
&\le \sum_{\substack{\{10c(j,x,y),11c(j,x,y)\}\cap Q_m(U)\ne\emptyset\\j>0, x,y\in\N}}
    \underbrace{\ell\big( \{q\in\az^{\ge n+1}\mid \text{$q$ is queried by $F_j^u(x)$}\}\big)}_{\le 
    \nicefrac{|c(j,x,y)|}{2}}\\
&\hspace{-21mm}\le \nicefrac{1}{2} \cdot \sum_{\substack{\{10c(j,x,y),11c(j,x,y)\}\cap Q_m(U)
    \ne\emptyset\\j>0, x,y\in\N}}|c(j,x,y)| \le \nicefrac{\ell(Q_m(U))}{2},
\end{eqnarray*}
which finishes the proof.
\end{proof}

As by the choice of $n$, it holds $|Q(U)| \le\ell(Q(U))\le 2(n^i+i) < 2^n$, 
there exists $\alpha\in 0\az^n$ that is not in $Q(U)$. 
Let $u'$ be the minimal $t_s$-valid oracle $\sqsupseteq u_0$ that is defined for all words $\le 01^n$ and 
satisfies $u'\cap 0\az^n = \{\alpha\}$.
Such an oracle exists by Claim~\ref{claim_82146969998}.\ref{item_389217401}. 

\begin{claim}\label{claim_13780781024}
There exists a $t_s$-valid oracle $v\sqsupsetneq u'$ that is defined for
all words of length $2(n^i + i)$ and satisfies $v(q) = u(q)$ for all
$q\in Q(U)$. 
\end{claim}
\begin{proof}
As $\alpha\notin Q(U)$ it holds
$u'(q) = u(q)$ for all $q\in Q(U)$ that $u'$ is defined for.

It suffices to show the following:
\begin{equation}\label{eq_1}
\parbox[c]{128mm}{
    For each $t_s$-valid $w\sqsupseteq u'$ with $w(q) = u(q)$ for all $q\in Q(U)$
    that $w$ is defined for, there exists $b\in\{0,1\}$ such that $wb$ is $t_s$-valid
    and $wb(q) = u(q)$ for all $q\in Q(U)$ that $wb$ is defined for.
}
\end{equation}
Let some $w$ with the properties of (\ref{eq_1}) be given. Moreover, let $z = |w|$,
i.e., $z$ is the least word that $w$ is not defined for.
We study three cases.
\begin{enumerate}[wide, labelwidth=!, labelindent=0pt]
    \item Assume $z = 1ac(j,x,F_j^w(x))$ for $a\in\{0,1\}$, $j\in\N^+$, and
    $x,y\in\N$ with $0< t_s(a,j) \le z$. Then choose $b = 1$.
    According to Claim~\ref{claim_82146969998}.\ref{item_01283470} (resp., 
    Claim~\ref{claim_82146969998}.\ref{item_10278480} in case $a = 1$) the oracle $wb$ is $t_s$-valid.
    
    It remains to show that $z\in Q(U)\Rightarrow z\in u$. For a contradiction assume
    $z\in Q(U) \wedge z\notin u$. Then $F_j^u(x) \ne F_j^w(x)$, since
    $F_j^u(x) = F_j^w(x)$ and $0<t_s(a,j) \le z$ would imply by V3 (resp., V5 in case $a = 1$), 
    that $1ac(j,x,y)\in u$. Hence, $F_j^w(x)\ne F_j^u(x)$, which shows that there is some query
    $q\in u\triangle w$ that is asked by both computations $F_j^w(x)$ and $F_j^u(x)$ (otherwise,
    the two queries would output the same value). 
    In particular, $q\in Q(U)$. As $|q| \le |x|^j + j < |c(j,x,y)| < c(j,x,y)$,
    the oracle $w$ is defined for $q$ and by assumption $w(q) = u(q)$, a contradiction.
    
    \item If $z = 0y$ for $y\in\az^m$ and $m\in\N$, then we choose $b = u(z)$ and hence,
    $wb(q) = u(q)$ for all $q\in Q(U)$ that $wb$ is defined for. Moreover, by
    Claim~\ref{claim_82146969998}.\ref{item_389217401}, $wb$ is $t_s$-valid.
    
    \item For the remaining cases choose $b = 0$. Then Claim~\ref{claim_82146969998}.\ref{item_4712800032}
    states that $wb$ is $t_s$-valid. It remains to show that $z\in Q(U) \Rightarrow z\notin u$.

    For a contradiction assume $z\in Q(U)\cap u$. Let $u''$ be the prefix of $u$ that is defined
    for exactly the words $< z$. As $w_{s-1}\sqsubseteq u''\sqsubseteq u$ and $w_{s-1}$ as well as
    $u$ are $t_s$-valid, $u''$ is $t_s$-valid as well by Claim~\ref{claim_1094101243}. 
    \begin{itemize}
        \item Assume that
    Claim~\ref{claim_82146969998}.\ref{item_01283470} or
    Claim~\ref{claim_82146969998}.\ref{item_10278480} can be applied to $u''$. Then
    $z = 1ac(j,x,F_j^{u''}(x))$ for $a\in\{0,1\}$, $j\in\N^+$, and $x\in\N$ with
    $0<t_s(a,i)\le z$. By Claim~\ref{claim_041730745333}, $F_j^u(x) = F_j^{u''}(x)$, which implies
    $F_j^u(x)\ne F_j^w(x)$ (otherwise, we were in a case that has already been treated). This
    shows that there is some query
    $q\in u\triangle w$ that is asked by both computations $F_j^w(x)$ and $F_j^u(x)$ (otherwise,
    the two computations would output the same value). 
    In particular, $q\in Q(U)$. As $|q| \le |x|^j + j < |c(j,x,y)| < c(j,x,y)$,
    the oracle $w$ is defined for $q$ and by assumption $w(q) = u(q)$, a contradiction. 
        \item Now assume that Claim~\ref{claim_82146969998}.\ref{item_389217401} or 
    Claim~\ref{claim_82146969998}.\ref{item_4712800032} can be applied to $u''$ and yields that
    $u''0$ is $t_s$-valid. By Claim~\ref{claim_82146969998}, $u''0$ can be extended to a $t_s$-valid
    oracle $v'$ defined for exactly the words of length $\le 2(n^i + i)$. As $u$ and $v'$ agree on all
    words $< z$ and $v'(z) = 0 < 1 = u(z)$, it holds $v'< u$, in contradiction to the choice of $u$.
    \end{itemize}
    In both cases we obtain a contradiction. Hence, $u(q) = w0(q)$ for all $q\in Q(U)$ that $w1$ is defined for.
    
\end{enumerate}

In all cases (\ref{eq_1}) holds. This completes the proof of Claim~\ref{claim_13780781024}.
\end{proof}

Recall that $P_i^u(0^n)$ rejects. Let $v$ be the oracle postulated by Claim~\ref{claim_13780781024}. 
Since all queries of $P_i^u(0^n)$ are in $U\subseteq Q(U)$ and by Claim~\ref{claim_13780781024},
$u$ and $v$ agree on all these queries, the computation $P_i^v(0^n)$ rejects as well. Moreover,
this computation is defined as $v$ is defined for all words of length $2(n^i + i)$. However, as $\alpha\in v$,
we obtain $0^n\in A^{v'}$ for all $v'\sqsupseteq v$, 
which is a contradiction to the assumption that the construction fails in step~$s$
treating the task $(2,i)$.

We now have seen that the construction described above is possible. It remains to prove that
\begin{itemize}
	\item $\cNP^O\ne\cP^O$,
    \item $K^O$ has $\cP^O$-optimal proof systems, and
    \item $\oli{K^O}$ has $\cP^O$-optimal proof systems.
\end{itemize}
This is shown in the next three claims.

\begin{claim}
$\cNP^O\ne\cP^O$.
\end{claim}
\begin{proof}
Assume $\cNP^O = \cP^O$. Then there exists $i>0$ such that $L(P_i^O) = A^O$.
Let $s$ be the step with $T(s) = (2,i)$. By construction, there exists some $n\in\N$ such that the
        computation $P_i^{w_s}(0^n)$ is defined, $0^n\in A^v\Leftrightarrow
        0^n\in A^{w_s}$ for all $v\sqsupseteq w_s$, and $\big(0^n\in A^{w_s} \Leftrightarrow 
        P_i^{w_s}(0^n)\text{ rejects}\big)$.
Hence, $0^n\in A^O$ if and only if $P_i^O(0^n)$ definitely rejects.
This contradicts $L(P_i^O) = A^O$.
\end{proof}

\begin{claim}
$K^O$ has $\cP^O$-optimal proof systems.
\end{claim}
\begin{proof}
Let $g$ be a proof system for $K^O$ and $a\in K^O$.
Define
$$
f(z) = \begin{cases}
y&\text{if $z = 010c(i,x,y)$ and $10c(i,x,y)\in O$ for $i\in\N^+$ and $x,y\in\N$}\\
g(y)&\text{if $z = 1y$}\\
a&\text{otherwise}
\end{cases}
$$
Then $f\in\FP^O$ and $f(\N)\supseteq K^O$ as $g$ is a proof system for $K^O$.
We show $f(\N) \subseteq K^O$. As $g$ is a proof system for $K^O$ and $a\in K^O$, 
it suffices to show $f(z)\in K^O$ for $z = 010c(i,x,y)$ with $10c(i,x,y)\in O$, $i\in\N^+$, and $x,y\in\N$.
Let $s$ be large enough such that $w_s$ is defined for $10c(i,x,y)$. Then by V1 and Claim~\ref{claim_041730745333}, 
$y\in K^v$ for all
$v\sqsupseteq w_s$. It follows $f(z) = y\in K^O$ and thus, $f$ is a proof system for $K^O$.

In order to show that $f$ is $\cP^O$-optimal, let $h$ be an arbitrary proof system for $K^O$.
Then there exists $i\in\N^+$ such that $F_i^O$ computes $h$. Let $s$ be the step with $T(0,i) = s$.
It holds $t_s(0,i) > 0$ (otherwise, by V2 there exists $x$ such that $F_i^w(x)$ is defined  
                and $F_i^w(x)\notin K^v$ for all $v\sqsupseteq w$, which would imply that $F_i^O$
                is not a proof system for $K^O$).
Define
$$
\pi(x) = \begin{cases}
010c(i,x,F_i^O(x))&\text{if $10c(i,x,F_i^O(x)) \ge t_s(0,i)$}\\
z&\text{if $10c(i,x,F_i^O(x)) < t_s(0,i)$ and $z$ is minimal with $f(z) = F_i^O(x)$}
\end{cases}
$$
$\pi$ is total as $f$ and $F_i^O$ are proof systems for $K^O$ and thus, for each
$x$ there exists $z$ with $f(z) = F_i^O(x)$.
Moreover, since $t_s(0,i)$ is a constant, $\pi\in \FP^O$. It remains to show $F_i^O(x) = f(\pi(x))$
for all $x$. For all $x$ with $10c(i,x,F_i^O(x)) < t_s(0,i)$, this clearly holds.
Assume $10c(i,x,F_i^O(x)) \ge t_s(0,i)$. Choose $s'\ge s$ large enough such that
$w_{s'}$ is defined for $10c(i,x,F_i^O(x))$. By Claim~\ref{claim_041730745333},
the computation $F_i^{w_{s'}}(x)$ is defined and hence, $F_i^{w_{s'}}(x) = F_i^O(x)$. 
Then by V3, $10c(i,x,F_i^O(x)) = 10c(i,x,F_i^{w_{s'}}(x)) \in w_{s'}\subseteq O$ and
thus, $f(\pi(x)) = f(010c(i,x,F_i^O(x))) = F_i^O(x)$, which shows that $F_i^O$ is $\cP^O$-simulated
by $f$. This completes the proof.
\end{proof}

\begin{claim}
$\oli{K^O}$ has $\cP^O$-optimal proof systems.
\end{claim}
\begin{proof}
Let $g$ be a proof system for $\oli{K^O}$ and $a\in \oli{K^O}$.
Define
$$
f(z) = \begin{cases}
y&\text{if $z = 011c(i,x,y)$ and $11c(i,x,y)\in O$ for $i\in\N^+$ and $x,y\in\N$}\\
g(y)&\text{if $z = 1y$}\\
a&\text{otherwise}
\end{cases}
$$
Then $f\in\FP^O$ and $f(\N)\supseteq \oli{K^O}$ as $g$ is a proof system for $\oli{K^O}$.
We show $f(\N) \subseteq \oli{K^O}$. As $g$ is a proof system for $\oli{K^O}$ and $a\in \oli{K^O}$, 
it suffices to show $f(z)\in \oli{K^O}$ for $z = 011c(i,x,y)$ with $11c(i,x,y)\in O$, $i\in\N^+$, and $x,y\in\N$.
Let $s$ be large enough such that $w_s$ is defined for $11c(i,x,y)$. Then by V1 and Claim~\ref{claim_041730745333}, 
$y\in \oli{K^v}$ for all
$v\sqsupseteq w_s$. It follows $f(z) = y\in \oli{K^O}$ and thus, $f$ is a proof system for $\oli{K^O}$.

In order to show that $f$ is $\cP^O$-optimal, let $h$ be an arbitrary proof system for $\oli{K^O}$.
Then there exists $i\in\N^+$ such that $F_i^O$ computes $h$. Let $s$ be the step with $T(1,i) = s$.
It holds $t_s(1,i) > 0$ (otherwise, by V2 there exists $x$ such that $F_i^w(x)$ is defined  
                and $F_i^w(x)\in \oli{K^v}$ for all $v\sqsupseteq w$, which would imply that $F_i^O$
                is not a proof system for $\oli{K^O}$).
Define
$$
\pi(x) = \begin{cases}
011c(i,x,F_i^O(x))&\text{if $11c(i,x,F_i^O(x)) \ge t_s(1,i)$}\\
z&\text{if $11c(i,x,F_i^O(x)) < t_s(1,i)$ and $z$ is minimal with $f(z) = F_i^O(x)$}
\end{cases}
$$
$\pi$ is total as $f$ and $F_i^O$ are proof systems for $\oli{K^O}$ and thus, for each
$x$ there exists $z$ with $f(z) = F_i^O(x)$.
Moreover, since $t_s(1,i)$ is a constant, $\pi\in \FP^O$. It remains to show $F_i^O(x) = f(\pi(x))$
for all $x$. For all $x$ with $11c(i,x,F_i^O(x)) < t_s(1,i)$, this clearly holds.
Assume $11c(i,x,F_i^O(x)) \ge t_s(1,i)$. Choose $s'\ge s$ large enough such that
$w_{s'}$ is defined for $11c(i,x,F_i^O(x))$. By Claim~\ref{claim_041730745333},
the computation $F_i^{w_{s'}}(x)$ is defined and hence, $F_i^{w_{s'}}(x) = F_i^O(x)$. 
Then by V5, $11c(i,x,F_i^O(x)) = 11c(i,x,F_i^{w_{s'}}(x)) \in w_{s'}\subseteq O$ and
thus, $f(\pi(x)) = f(011c(i,x,F_i^O(x))) = F_i^O(x)$, which shows that $F_i^O$ is $\cP^O$-simulated
by $f$. This completes the proof.
\end{proof}

 This completes the proof of Theorem~\ref{theorem_0917240914}.
\end{proof}

\bibliographystyle{alpha}


\end{document}